\newcommand{\mech} {{\tt MATRIX}\/}
\newcommand{\squishlisttwo}{
\begin{list}{$\blacktriangleright$}
{ \setlength{\itemsep}{0.5pt}
\setlength{\parsep}{0pt}
\setlength{\topsep}{0pt}
\setlength{\partopsep}{0.5pt}
\setlength{\leftmargin}{1em}
\setlength{\labelwidth}{1em}
\setlength{\labelsep}{0.5em} } }
\newcommand{\squishend}{
\end{list} }
\begin{document}

\title{\textbf{Dynamic Mechanism Design with Interdependent Valuations}\thanks{A preliminary version of this work was presented in the conference on Uncertainty in Artificial Intelligence, 2011.}
}

\author[1]{Swaprava Nath}
\author[2]{Onno Zoeter}
\author[3]{Y. Narahari}
\author[2]{Christopher R. Dance}

\affil[1]{\small Indian Statistical Institute, New Delhi}
\affil[2]{\small Xerox Research Centre Europe, Meylan, France}
\affil[3]{\small Indian Institute of Science, Bangalore}

\date{June 2, 2015}
\maketitle

\begin{abstract}
We consider an infinite horizon dynamic mechanism design problem with interdependent valuations. In this setting the type of each agent is assumed to be evolving according to a first order Markov process and is independent of the types of other agents. However, the valuation of an agent can depend on the types of other agents, which makes the problem fall into an interdependent valuation setting. Designing truthful mechanisms in this setting is non-trivial in view of an impossibility result which says that for interdependent valuations, any efficient and ex-post incentive compatible mechanism must be a constant mechanism, even in a static setting. \cite{mezzetti2004mechanism} circumvents this problem by splitting the decisions of allocation and payment into two stages. However, Mezzetti's result is limited to a static setting and moreover in the second stage of that mechanism, agents are weakly indifferent about reporting their valuations truthfully. This paper provides a first attempt at designing a dynamic mechanism which is efficient, {\em strict} ex-post incentive compatible and ex-post individually rational in a setting with interdependent values and Markovian type evolution.
\end{abstract}


\section{Introduction}
\label{sec_intro}


Organizations often face the problem of executing a task for which they do not have enough resources or expertise. It may also be difficult, both logistically and economically, to acquire those resources. For example, in the area of healthcare, it has been observed that there are very few occupational health professionals and doctors and nurses in all specialities at the hospitals in the UK~\citep{nicholson04occupational-health}. With the advances in computing and communication technologies, a natural solution to this problem is to outsource the tasks to experts outside the organization. Hiring experts beyond an organization was already in practice. However, with the advent of the Internet, this practice has extended even beyond the international boundaries, e.g., some U.S.\ hospitals are outsourcing the tasks of reading and analyzing scan reports to companies in Bangalore, India~\citep{AP04hospital-outsourcing}. \cite{gupta2008outsourcing} give a detailed description of how the healthcare 
industry uses the outsourcing tool.

The organizations where the tasks are outsourced (let us call them vendors) have quite varied efficiency levels. For tasks like healthcare, it is extremely important to hire the right set of experts. If the efficiency levels of the vendors and the difficulties of the medical tasks are observable by a central management (controller), and if the efficiency levels vary over time according to a Markov process, the problem of selecting the right set of experts reduces to a Markov Decision Problem (MDP), which has been well studied in the literature~\citep{bertsekas95DP-opt-control, puterman05MDP}. Let us call the efficiency levels and task difficulties together as \emph{types} of the tasks and resources.

However, the types are usually observed privately by the vendors and hospitals (agents), who are rational and intelligent. The efficiencies of the vendors are private information of the vendors (depending on what sort of doctors they hire, or machines they use), and they might misreport this information in order to win the contract and to increase their net returns. At the same time the difficulty of the medical task is private to the hospital, and is unknown to the experts. A strategic hospital, therefore, can misreport the task difficulty to the hired experts as well. Hence, the asymmetry of information at different agents' end transforms the problem from a \emph{completely} or \emph{partially} observable MDP into a \emph{dynamic game} among the agents. 

Motivated by examples of this kind, in this paper, we analyze them using a formal mechanism design framework. We consider only cases where the solution of the problem involves monetary compensation in quasi-linear form. The reporting strategy of the agents and the decision problem of the controller is dynamic since we assume that the types of the tasks and resources are varying with time. In addition, the above problem has two characteristics, namely, \emph{interdependent values}: in a selected team of agents, the valuation of an agent depends not only on her own skills but also on the skills of other selected agents, and \emph{exchange economy}: a trade environment where both buyers (task owners) and sellers (resources) are present. In this paper, the theme of modeling and analysis would be centered around the settings of task outsourcing to strategic experts. We aim to have a socially efficient mechanism, and at the same time, that would demand truthfulness and voluntary participation of the agents.

\subsection{Prior work}
\label{sec:literature}

The above properties have been investigated separately in literature on \emph{dynamic mechanism design}. \cite{bergemann-valimaki10dynamic-pivot} have proposed an efficient mechanism called the \emph{dynamic pivot mechanism}, which is a generalization of the Vickrey-Clarke-Groves (VCG) mechanism~\citep{vickrey1961counterspeculation, clarke71public-goods, groves73VCG} in a dynamic setting, and serves to be truthful and efficient. \cite{athey-segal07efficient-DMD} consider a similar setting with an aim to find an efficient mechanism that is budget balanced. \cite{cavallo-etal06opt-coordinated-planning} develop a mechanism similar to the dynamic pivot 
mechanism in a setting with agents whose type evolution follows a Markov process. In a later work, \cite{cavallo-etal09dynamic-pop-types} consider \emph{periodically inaccessible} agents and dynamic private information jointly. Even though these mechanisms work for an exchange economy, they have the underlying assumption of \emph{private values}, i.e., the reward experienced by an agent is a function of the allocation and her own private types. \cite{mezzetti2004mechanism, mezzetti07MD-dependent-values}, on the other hand, explored the other facet, namely, {\em interdependent values}, but in a static setting, and proposed a truthful mechanism. The mechanism proposed in these two papers use a two-stage mechanism, since it is impossible to design a single-stage mechanism satisfying both truthfulness and efficiency even for a static setting~\citep{jehiel-moldovanu01efficient-dependent-valuation}. However, the mechanism provides a weak truthfulness guarantee in the second stage of the game. A similar result in the setting of interdependent valuations with static types by \cite{Nath2013} ensures that the truthfulness guarantee is strict. However, since both \cite{Nath2013} and \cite{mezzetti2004mechanism} consider mechanisms that use two stages of information realization - in the first stage the types are realized and the allocation is decided, and in the second stage the valuations are realized by the agents and payments are decided - both of them require attention on how the information is revealed to the agents. In this paper, we follow an approach similar to \cite{Nath2013} that guarantees strict truthfulness. However, the equilibrium concept used here is ex-post Nash because we assume agents play in an incomplete information setting, and contrast this with the mechanism of \cite{mezzetti2004mechanism}. We also discuss how a complete information setting along with the equilibrium concept of subgame perfection plays an important role in these results. We explain this point in detail while presenting the main result of the paper.

\subsection{Contributions}
\label{sec:contribution}

In this paper, we propose a dynamic mechanism named {\bf M}DP-based {\bf A}llocation and {\bf TR}ansfer in {\bf I}nterdependent-valued e{\bf X}change economies (abbreviated \mech), which is designed to address the class of \emph{interdependent values}. It extends the results of \cite{mezzetti2004mechanism} to a dynamic setting, and with a certain allocation and valuation structure, serves as an efficient, truthful mechanism where agents receive non-negative payoffs by participating in it. The key feature that distinguishes our model and results from that of the existing dynamic mechanism literature is that we address the interdependent values and dynamically varying types (in an exchange economy) jointly and provide a {\em strict} ex-post incentive compatible mechanism.
%
%
In Table~\ref{table:contribution}, we have summarized the different paradigms of the mechanism design problem, and their corresponding solutions in the literature.
\begin{table}[h]
\centering
 \begin{tabular}{|l|p{0.3\columnwidth}|p{0.4\columnwidth}|} \hline
Valuations & STATIC & DYNAMIC \\ \hline
Independent & \textbf{VCG Mechanism} & \textbf{Dynamic Pivot Mechanism} \\ 
& \citep{vickrey1961counterspeculation, clarke71public-goods, groves73VCG} & \citep{bergemann-valimaki10dynamic-pivot, cavallo-etal06opt-coordinated-planning} \\ \hline
Interdependent & \textbf{Generalized VCG} & {\textcolor{blue}{Mechanism \mech}} \\
& \citep{mezzetti2004mechanism} & (this paper) \\ \hline
 \end{tabular}
\caption{The different paradigms of mechanism design problems with their solutions.}
\label{table:contribution}
\end{table}

Our main contributions in this paper can be summarized as follows.

\if 0
In this paper, we investigate a dynamic mechanism design problem with dependent valuations. \citet{jehiel-moldovanu01efficient-dependent-valuation} have shown that with dependent valuation, it is impossible to design a single stage (decision of allocation and payment decided in the same stage) mechanism even in a static setting. Given this difficulty, we circumvent this problem by considering a restricted class of dependent valuations, which we call as {\em peer influenced valuations} (PIV). Our mechanism design problem is motivated by (but not limited to) the problem of task allocation to outsourced strategic agents. Problems of this kind occur very often in outsourced services industry. Naturally the allocation set in our model is the set of all subsets of the agents. Under the PIV model, the value of each agent depends on the private types of all the agents who are selected by an allocation rule. In this setting we show the following.
\fi

\begin{itemize}
   \item We propose a dynamic mechanism \mech, that is {\em efficient}, {\em truthful} (Theorem~\ref{thm:epic}) and {\em voluntary participatory} (Theorem~\ref{thm:epir}) for the agents in an \emph{interdependent-valued exchange economy}.
  \squishlisttwo
   \item This extends the classic mechanism proposed by \citet{mezzetti2004mechanism} to a dynamic setting.
   \item It solves the issue of weak indifference by the agents in the second stage of the classic mechanism.
  \squishend
  However, we will see that Theorem~\ref{thm:epic} is true with a restricted domain of {\em subset allocation} and {\em peer-influenced valuations}. These two properties were not needed to achieve a similar claim in the {\em static} setting \citep{Nath2013}. We do not know if these are the minimal requirements for efficiency and truthfulness, but it is important to note that these properties in the {\em dynamic} setting do not immediately follow from its {\em static} counterpart.
  \item We discuss why the dynamic pivot mechanism~\citep{bergemann-valimaki10dynamic-pivot} does not satisfy all the properties that \mech\ satisfies (Section~\ref{sec:why-not-DPM}).
  \item We discuss that these results can be extended to a more general setting in Section~\ref{sec:general}.

\end{itemize}

We also discuss that \mech\ comes at a computational cost which is the same as that of its independent value counterpart (Section~\ref{sec:complexity}).

The rest of the paper is organized as follows. We introduce the formal model in Section~\ref{model-back}, and present the main results in Section~\ref{sec:main-results}. In Section~\ref{sec:general}, we discuss about a generalization of the main results. We conclude the paper in Section~\ref{concl} with some potential future works.

\section{Background and Model}
\label{model-back}

Let the set of agents be given by $N =
\{1,\ldots,n\}$, who interact with each other for a countably infinite time horizon indexed by time steps $t=0,1,2,\ldots$. The time-dependent type of each agent is denoted by $\theta_{i,t} \in
\Theta_i$ for $i \in N$. We will use the shorthands $\theta_t \equiv
(\theta_{1,t},\ldots,\theta_{n,t}) \equiv (\theta_{i,t}, \theta_{-i,t})$, where $\theta_{-i,t}$ denotes the type vector of all agents excluding agent $i$. We will refer to $\theta_t$ as the type profile at time $t$, $\theta_t \in \Theta \equiv \times_{i \in N} \Theta_i$.

The allocation set is denoted by $A$. In each round $t$, the mechanism designer chooses an allocation $a_t$ from this set and decides a payment $p_{i,t}$ to agent $i$. The allocation leads to a valuation to agent $i$, $v_i : A \times \Theta \to \mathbb{R}$. This is in contrast to the classical independent valuations (also called \emph{private values}) case where valuations are assumed to depend only on $i$'s own type; $v_i : A \times \Theta_i \rightarrow \mathbb{R}$. However, we assume for all $i$, $|v_i(a,\theta)| \leq M < \infty$, for some $M \in \mathbb{R}$ and for all $a$ and $\theta$.


  
\paragraph{\sc Stationary Markov Type Transitions, SMTT} 

The combined type $\theta_t$ follows a first order Markov process
which is governed by the transition probability function $F(\theta_{t+1}|a_t, \theta_t)$, which is independent across agents, where $a_t$ is the allocation at period $t$.

\begin{definition}[Stationary Markov Type Transitions, SMTT] \label{SMTT}
 We call the type  transitions to follow stationary Markov type transitions if the joint distribution $F$ of the types of the agents $\theta_t \equiv (\theta_{1,t}, \cdots, \theta_{n,t})$, and the marginals $F_i$'s exhibit the following for all $t$.
 \begin{equation}
  \label{eq:MTT}
   \begin{aligned}
    F(\theta_{t+1}|a_t, \theta_t, \theta_{t-1}, \cdots, \theta_0) &= F(\theta_{t+1}|a_t, \theta_t), \mbox{ and } \\
    F(\theta_{t+1}|a_t, \theta_t) &= \prod_{i \in N} F_i(\theta_{i, t+1}|a_t, \theta_{i,t}).
   \end{aligned}
 \end{equation}
\end{definition}

We will assume the types to follow SMTT throughout this paper.

For an easier exposition of the more general properties that lead to the same conclusions as in this paper, we will restrict our attention to a restricted space of allocations and valuations. In Section~\ref{sec:general}, we comment on the generalization of our results by introducing certain assumptions that subsume the following two assumptions on the allocation and valuations.

\paragraph{\sc Subset Allocation, SA} 

Let us motivate this restriction with the medical task assignment example given in the previous section. The organizations outsource tasks to experts for a payment, where the expert may have different and often time-varying capabilities of executing the task. The task owners come with a specific task difficulty (type of the task owner), which is usually privately known to them, while the workers' capabilities (types of the workers) are their private information. A central planner's job in this setting is to efficiently assign the tasks to a group of workers. Clearly, in this setting, the set of possible allocations is the set of the subsets of agents, i.e., $A = 2^N$. Note that, for a finite set of players, the allocation set is always finite. So, we can formally define this setting as follows.

\begin{definition}[Subset Allocation, SA] \label{SA}
 When the set of allocations is the set of all subsets of the agent set, i.e., $A = 2^N$, we call the domain a {\em subset allocation} domain. Similarly, $A_{-i} = 2^{N \setminus \{i\}}$ denotes the set of allocations excluding agent $i$.
\end{definition}


\paragraph{\sc Peer Influenced Valuations, PIV} 

Even though the valuation of agent $i$ is affected by not only her private type but also by the types of others, it is often the case that the valuation is affected by the types (e.g. the efficiencies of the workers in a joint project) of only the {\em selected} agents. The valuation therefore is a function of the types of the allocated agents and not the whole type vector. We also assume that the value of a non-selected agent is zero. The set of valuations satisfying the above two conditions is called the set of {\em peer influenced valuations} (PIV).

\begin{definition}[Peer Influenced Valuations, PIV] \label{PIV}
 This is a special set of interdependent valuations in the SA domain, where the valuation of agent $i$ is a function of the types of other selected agents, given by,
 \begin{align}
 v_i(a, \theta) &= \left \{ 
		  \begin{array}{ll}
		   v_i(a, \theta_a) & \mbox{ if } i \in a \\
		   0 & \mbox{ otherwise,}
		  \end{array}
		  \right.
\end{align}
where $\theta_a \in \times_{i \in a} \Theta_i$, for an allocation $a \in A = 2^N$.
\end{definition}

The properties SA and PIV together allow for a well-defined counterfactual social welfare in a world where a particular agent does not exist. See also Equation~(\ref{eq:social-welfare-except-i-PIV}).

\paragraph{\sc Efficient Allocation, EFF} 

The mechanism designer aims to maximize the 
sum of the valuations of task owners and workers, summed over an infinite horizon,
geometrically discounted with factor $\delta \in (0,1)$. The discount factor accounts for the fact that a future payoff is less valued by an agent than a current stage payoff. We assume $\delta$ to be common knowledge.
If the designer would have perfect information about the $\theta_t$'s, his
objective would be to find a {\em policy} $\pi_t$, which is a sequence of allocation functions from time $t$, that yields the following for all $t$ and for all type profiles $\theta_t$,
 \begin{equation}
  \label{eq:eff-allocation-old}
  \pi_t \in \argmax_{\gamma} \ \mathbb{E}_{\gamma, \theta_t} \left[
\sum_{s=t}^{\infty} \delta^{s-t} \sum_{i \in N} v_i(a_s(\theta_{s}), \theta_{s})\right],
 \end{equation}
where $\gamma = (a_t(\cdot), a_{t+1}(\cdot), \ldots)$ is any arbitrary sequence of allocation functions.
Here we use $\mathbb{E}_{\gamma, \theta_t} [ \cdot ] = \mathbb{E} [\ \cdot \ | \theta_t; \gamma]$ for brevity of notation. We point to the fact that the allocation policy $\gamma$ is not a random variable in this expectation computation. The policy is a functional that specifies what action to take in each time instant for a given type profile. Different policies will lead to different sequences of allocation functions over the infinite horizon, and the efficient allocation is the one that maximizes the expected discounted sum of the valuations of all the agents.

In general, the allocation policy $\pi_t$ depends on the time instant $t$. However, for the special kind of stochastic behavior of the type vectors, namely SMTT, and due to the infinite horizon discounted utility, this policy becomes stationary, i.e., independent of $t$. We will denote such a stationary policy by $\pi = (a(\cdot), a(\cdot), \ldots)$.
Thus, the efficient allocation under SMTT reduces to solving for the optimal action in the following stationary Markov Decision Problem (MDP).

\begin{eqnarray} 
W(\theta_t) &=& \max_{\pi} \ \mathbb{E}_{\pi, \theta_t} \left[ \sum_{s = t}^{\infty}
\delta^{s-t} \sum_{j \in N} v_j(a(\theta_{s}), \theta_{s})\right] \nonumber \\ 
&=& \max_{a \in A} \ \mathbb{E}_{a, \theta_t} \left[ \sum_{j \in N} v_j(a,
\theta_{t}) + \delta \mathbb{E}_{\theta_{t+1} | a, \theta_t} W(\theta_{t+1})
\right]. \label{eq:social-welfare}
\end{eqnarray}

Here, with a slight abuse of notation, we have used $a$ to denote the actual action taken in $t$ rather than the allocation function.
The second equality comes from a standard recursive argument for stationary infinite horizon MDPs. We refer an interested reader to standard text \citep[e.g.]{puterman05MDP} for this reduction and the general properties of MDPs. We have used the following shorthand, $\mathbb{E}_{\theta_{t+1} | a, \theta_t} [ \cdot ] = \sum_{\theta_{t+1}} p(\theta_{t+1} | \theta_t; a_t)[ \cdot ]$.
We will refer to $W$ as the {\em social welfare}. The efficient allocation under SMTT is defined as follows.

\begin{definition}[Efficient Allocation, EFF] \label{EFF}
 An allocation policy $a(\cdot)$ is \textbf{efficient} under SMTT if for all type profiles $\theta_t$,
 \begin{equation}
  \label{eq:eff-allocation}
a(\theta_t) \in \argmax_{a \in A} \ \mathbb{E}_{a, \theta_t} \left [ \sum_{j \in N} v_j(a,
\theta_t) + \delta \mathbb{E}_{\theta_{t+1} | a, \theta_t} W(\theta_{t+1}),
\right ].
 \end{equation}
\end{definition}


\paragraph{\sc Challenges in mechanism design with interdependent valuations}
    
The value interdependency among the agents poses a challenge for designing mechanisms. Even in a static setting, if the allocation and payment are decided simultaneously under the interdependent valuation setting, efficiency and Bayesian incentive compatibility (and therefore ex-post incentive compatibility) cannot be satisfied together~\citep{jehiel-moldovanu01efficient-dependent-valuation}. In a later paper, \citet{jehiel2006limits} show that the only deterministic social choice functions that are ex-post implementable in generic mechanism design frameworks with multi-dimensional signals, interdependent valuations, and transferable utilities are constant functions. In view of these impossibility results, we are compelled to split the decisions of allocation and payment in two separate stages. We would mimic the two-stage mechanism of \cite{mezzetti2004mechanism} for each time instant of the dynamic setting (see Figure~\ref{fig:example-2}). We consider a direct revelation mechanism. In the first stage of this two-stage mechanism, the agents observe their individual types $\theta_{i,t} \in \Theta_i, \ i \in N$. The strategies available to the agents are to report any type $\hat{\theta}_{i,t} \in \Theta_i$. The designer decides the allocation $a(\hat{\theta}_t)$ depending on the reported types $\hat{\theta}_t$ in first stage. The reported types of the agents are {\em not} revealed publicly in the first stage. This assumption plays a crucial role in the concept of {\em incentive compatibility} we use in this paper. We discuss this after the definition of incentive compatibility briefly and in detail in the next section. After the allocation, the agents observe their valuations $v_i(a(\hat{\theta}_t), \theta_t)$'s, and report $\hat{v}_{i,t}$'s to the designer. The payment decision is made after this second stage of reporting. Our definition of incentive compatibility is accordingly modified for a two stage mechanism.

\begin{figure}[t!]
 \centering
 \includegraphics[width=\columnwidth]{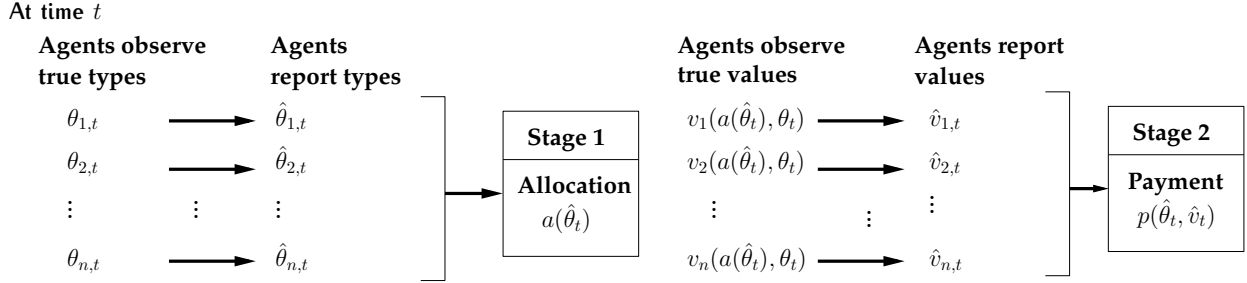}
 \caption{Graphical illustration of a candidate dynamic mechanism in an interdependent value setting.}
 \label{fig:example-2}
\end{figure}

Due to SMTT and the infinite horizon of the MDP, we will focus only on stationary mechanisms, that give a stationary allocation and payment to the agents in each round of the dynamic game.
Let us denote a typical two-stage dynamic mechanism by $M = \langle a, p \rangle$. The function $a : \Theta \to A$ yields an allocation for a reported type profile $\hat{\theta}_t$ in round $t$. 
Depending on the reported types in the first stage, the mechanism designer decides the allocation $a(\hat{\theta}_t)$, due to which agent $i$ experiences a valuation of $v_i(a(\hat{\theta}_t), \theta_t)$ in round $t$. Let us suppose that in the second stage, the reported value vector is given by $\hat{v}_t$. The payment function $p$ is a vector where $p_i(\hat{\theta}_t, \hat{v}_t)$ is the payment received by agent $i$ at instant $t$. 
Combining the value and payment in each round we can write the expected discounted utility of agent $i$ in the \emph{quasi-linear} setting, denoted by $u^M_i(\hat{\theta}_t, \hat{v}_t | \theta_t)$, when the true type vector is $\theta_t$ and the reported type and value vectors are $\hat{\theta}_t$ and $\hat{v}_t$ respectively. This utility has two parts: (a) the current round utility, and (b) expectation over the future round utilities. The expectation over the future rounds is taken on the true types. Thus the effect of manipulation is limited only to the current round in this utility expression. This is enough to consider due to the {\em single deviation principle} of \citet{blackwell1965discounted}.

\begin{align}
  \lefteqn{u^M_i(\hat{\theta}_t, \hat{v}_t | \theta_t)} \nonumber \\
  &= \underbrace{v_i(a(\hat{\theta}_t), \theta_{t}) +
p_i(\hat{\theta}_t, \hat{v}_{t})}_{\textsf{current round utility}} + \underbrace{\mathbb{E}_{\pi, \theta_t}
\left[\sum_{s= t+1}^{\infty} \delta^{s-t} (v_i(a(\theta_s),\theta_{s}) + p_i(\theta_s,v_s)) \right]}_{\textsf{expected discounted future utility}} \label{eq:discounted-utility}
\end{align}
Here $\pi$ denotes the stationary policy of actions, $(a(\cdot),a(\cdot), \ldots)$. For the SMTT, the type evolution is dependent on only the current type profile and action. To avoid confusion, we will use $\pi$, $a(\hat{\theta}_t)$, or $a(\theta_s), s \geq t+1$, according to the context.

Equipped with this notation, we can now define {\em incentive compatibility}.

\begin{definition}[w.p. EPIC] \label{EPIC}
 A mechanism $M = \langle a, p \rangle$ is {\em within period Ex-post Incentive Compatible (w.p. EPIC)} if for all agents $i \in N$, for all possible true types $\theta_t$, for all reported types $\hat{\theta}_{i,t}$, for all reported values $\hat{v}_{i,t}$, and for all $t$,
 \begin{align*}
  \label{eq:epic}
  \lefteqn{u^M_i(\theta_t, (v_i(a(\theta_t), \theta_t), v_{-i}(a(\theta_t), \theta_t) ) | \theta_t)} \qquad \\
     \qquad &\geq u^M_i((\hat{\theta}_{i,t}, \theta_{-i,t}), (\hat{v}_{i,t}, v_{-i}(a(\hat{\theta}_{i,t}, \theta_{-i,t}), \theta_t)) | \theta_t)
 \end{align*}
\end{definition}

That is, reporting the types and valuations in the two stages truthfully is an ex-post Nash equilibrium. We use `ex-post' to denote that the agent chooses her action after observing her own type and valuation, and not the types of others, since that is not revealed to her according to the mechanism considered here.~\footnote{Some readers may interpret the term `ex-post' differently, since the term is conventionally used in the context of single stage mechanisms, i.e., where the decisions of allocation and transfer are decided simultaneously (see, e.g., \cite{jehiel2006limits}) and it denotes that truthful reporting is optimal for every realization of the other agents' types even if the agent knew the other agents' types. In the context of two-stage mechanisms that we consider here, we feel that this equilibrium of full observability can be better called as `subgame perfect' equilibrium. This is the equilibrium concept used in the static two stage mechanism by \citet{mezzetti2004mechanism}, and we discuss in detail the difference of the two equilibria concepts in the next section.} 
The reported valuation $\hat{v}_{i,t}$ is therefore a function of the types $\theta_{i,t}$ and $\hat{\theta}_{i,t}$ and not of either $\theta_{-i,t}$ and $\hat{\theta}_{-i,t}$, according to the assumption above. An interesting question would be: what happens when the agents' type reports in the first stage are made public. The agents' valuation reports in the second stage can then depend on the type reports in the first stage. The appropriate equilibrium concept in that setting is the {\em subgame perfect equilibrium}. We present a detailed discussion on the implications of revealing the type reports in the first stage after presenting the proposed mechanism in the next section.

In this context, {\em individual rationality} is defined as follows.

\begin{definition}[w.p. EPIR] \label{EPIR}
 A mechanism $M = \langle a, p \rangle$ is {\em within period Ex-post Individually Rational (w.p. EPIR)} if for all agents $i \in N$, for all possible true types $\theta_{t}$ and for all $t$,
\[u^M_i(\theta_t, (v_i(a(\theta_t), \theta_t), v_{-i}(a(\theta_t), \theta_t)) | \theta_t) \geq 0.\]
\end{definition}
That is, reporting the types and valuations in the two stages truthfully yields non-negative expected utility.

\section{The \mech\ Mechanism under SA and PIV}
\label{sec:main-results}

In the interdependent valuation setting, our goal is to design a mechanism which is efficient (Def.~\ref{EFF}), w.p. EPIC (Def.~\ref{EPIC}), and w.p. EPIR (Def.~\ref{EPIR}). This is non-trivial because to achieve efficient allocation in a dynamic setting, one needs to consider the expected future evolution of the types of the agents, which would reflect in the allocation and payment decisions, and for this reason a {\em fixed payment} mechanism or a {\em repeated VCG} fails to satisfy efficiency (Def.~\ref{EFF}). The value interdependency among the agents plays a crucial role here. Even in a static  interdependent value setting, if the allocation and payment are decided simultaneously, one cannot guarantee efficiency and incentive compatibility together \citep{jehiel-moldovanu01efficient-dependent-valuation}. One way out is to split the decision of allocation and payment in two stages~\citep{mezzetti2004mechanism}. 

Following this observation, we propose {\bf M}DP-based {\bf A}llocation and {\bf TR}ansfer in {\bf I}nterdependent-valued e{\bf X}change economies (\mech), which we prove to satisfy EFF, w.p. EPIC and w.p. EPIR under the restricted setting of SA and PIV.


%
%
%


Given the dynamics of the game, illustrated in Figure~\ref{fig:example-2}, the agents report their types in the first stage, and then the allocation is decided. In the second stage, they report their experienced values and the payment is decided. The task of the mechanism designer, therefore, is to design the allocation and payment rules $\langle a, p \rangle$ in each time instant. 



In the context of SA and PIV, the social welfare given by Eq.~(\ref{eq:social-welfare}) is modified as follows.

\begin{eqnarray} 
W(\theta_t)
&=& \max_{\pi} \ \mathbb{E}_{\pi, \theta_t} \left[ \sum_{s = t}^{\infty}
\delta^{s-t} \sum_{j \in N} v_j(a(\theta_s), \theta_{a(\theta_s)})\right] \nonumber \\ 
&=& \max_{a \in A} \ \mathbb{E}_{a, \theta_t} \left[ \sum_{j \in N} v_j(a,
\theta_{a}) + \delta \mathbb{E}_{\theta_{t+1} | a, \theta_t} W(\theta_{t+1})
\right]. \label{eq:social-welfare-PIV}
\end{eqnarray}

We also define the maximum social welfare excluding agent $i$ to be $W_{-i}(\theta_{-i,t})$, which is the same as Eq.~(\ref{eq:social-welfare}) except now the sum of the valuations and the allocations are over all agents $j \neq i$. We also use the set of allocations excluding $i$ to be $A_{-i}$ as defined by SA,

\begin{eqnarray} 
\lefteqn{W_{-i}(\theta_{-i,t})} \nonumber \\ 
&=& \max_{a_{-i} \in A_{-i}} \ \mathbb{E}_{a_{-i}, \theta_t} \left[ \sum_{j \in N \setminus \{i\}} v_j(a_{-i},
\theta_{a_{-i}}) + \delta \mathbb{E}_{\theta_{t+1} | a_{-i}, \theta_t} W_{-i}(\theta_{-i,t+1})
\right]. \label{eq:social-welfare-except-i-PIV}
\end{eqnarray}
Note that, SA and PIV are crucial for defining this quantity.
Also, when $i$ is absent, the following two notations are equivalent: $\mathbb{E}_{\theta_{t+1} | a_{-i}, \theta_t} [ \cdot ] = \mathbb{E}_{\theta_{-i,t+1} | a_{-i}, \theta_{-i,t}} [ \cdot ]$, since the type of $i$ will be unchanged when she is not in the game. However, we adopt the former for consistency in notation. Using the definitions above and in the previous section, now we formally present \mech.

\begin{mechanism}[\mech]
 \label{gdpm}
Given the reported type profile $\hat{\theta}_t$ in stage 1, choose the agents
$a^*(\hat{\theta}_t)$ as follows. 

\begin{equation}
\label{allocation-MATRIX}
a^*(\hat{\theta}_t) \in \argmax_{a \in A} \mathbb{E}_{a, \hat{\theta}_t} \left [ \sum_{j \in N} v_j(a,
\hat{\theta}_{a}) + \delta \mathbb{E}_{\theta_{t+1} | a, \hat{\theta}_t} W(\theta_{t+1}),
\right ] 
\end{equation}

and transfer to agent
$i$ after agents report $\hat{v}_{t}$ in stage 2, a payment of,



\begin{align}
\label{payment-MATRIX}
 p_i^*(\hat{\theta}_t, \hat{v}_{t})
 = & \left(\sum_{j \neq i} \hat{v}_{j,t}\right) + \delta \mathbb{E}_{\theta_{t+1} | a^*(\hat{\theta}_t),\hat{\theta}_t} W_{-i}(\theta_{-i,t+1}) - W_{-i}(\hat{\theta}_{-i,t}) \nonumber \\
  & \qquad - \left ( \hat{v}_{i,t} - v_i(a^*(\hat{\theta}_t), \hat{\theta}_{a^*(\hat{\theta}_t)}) \right )^2 .
\end{align}

\end{mechanism}
\begin{wrapfigure}{r}{0.65\columnwidth}
    \begin{minipage}{0.65\columnwidth}
\begin{algorithm}[H]
\caption{\mech}
\label{matrix-alg}
\begin{algorithmic}
\FORALL{time instants $t$} 
\STATE \textbf{Stage 1:}
 \FOR{agents $i=0,1,\dots,n$}
  \STATE agent $i$ observes $\theta_{i,t}$;
  \STATE agent $i$ reports $\hat{\theta}_{i,t}$;
 \ENDFOR
\STATE compute allocation $a^*(\hat{\theta}_t)$ according to Eq.~\ref{allocation-MATRIX};
\STATE \textbf{Stage 2:}
 \FOR{agents $i=0,1,\dots,n$}
  \STATE agent $i$ observes $v_i(a^*(\hat{\theta}_t), \theta_{a^*(\hat{\theta}_t)})$;
  \STATE agent $i$ reports $\hat{v}_{i,t}$;
 \ENDFOR
\STATE compute payment to agent $i$, $p_i^*(\hat{\theta}_t, \hat{v}_{t})$, Eq.~\ref{payment-MATRIX};
\STATE types evolve $\theta_t \rightarrow \theta_{t+1}$ according to SMTT;
\ENDFOR
\end{algorithmic}
\end{algorithm}
    \end{minipage}
  \end{wrapfigure}

The last quadratic term in the above equation is agent $i$'s penalty of not being consistent with the first stage report. 
The intuition of charging a penalty is to make sure that agent $i$ be consistent with her reported type $\hat{\theta}_{i,t}$ in the first stage and her value report $\hat{v}_{i,t}$ in the second stage, given that others are reporting their types and values truthfully. We will argue that when all agents other than agent $i$ reports their types and values truthfully in the two stages of the mechanism, it is the best response for agent $i$ to do so as well. This term distinguishes our mechanism from that given by \citet{mezzetti2004mechanism}, where the agents are weakly indifferent between reporting true and false values in the second stage. We summarize the dynamics of \mech\ using an algorithmic flowchart in Algorithm~\ref{matrix-alg}. 

We have used this quadratic term for the ease of exposition. However, it is easy to show that any non-negative function $g(x,\ell)$ having the property that $g(x,\ell) = 0 \Leftrightarrow x = \ell$ would still satisfy the claims made in this paper. \cite{Nath2013} use a similar term to ensure strict truthfulness in the second stage of a two stage static mechanism with interdependent valuations. 

\paragraph{\sc \mech\ and Subgame Perfection}

Since this paper is an extension of the results of \cite{Nath2013} to a dynamic type setting, we can do similar comparisons of properties with the mechanism of \cite{mezzetti2004mechanism} (let us call this the {\em classic} mechanism).
If we consider the case where the first stage type reports are made public by the mechanism, i.e., observable by all agents, then the agents have a chance of modifying their next stage report depending on that information. The concept of truthfulness should be modified to subgame perfect equilibrium in this context, which ensures that truth-telling is an equilibrium in every subgame of the two stage game. It can be shown that an agent $i$ can misreport her type in the first stage from $\theta_i$ to $\hat{\theta}_i$ when other agents are reporting their types truthfully and in this subgame, since the reported types are public, each agent's best response would be to report valuations consistent with the first stage's {\em reported} types $v_i(a^*(\hat{\theta}_t), \hat{\theta}_{a^*(\hat{\theta}_t)})$ (and not the true valuations $v_i(a^*(\hat{\theta}_t), \theta_{a^*(\hat{\theta}_t)})$), which results in more utility to agent $i$ than reporting types truthfully in the first stage (see \cite{Nath2013}, where Example 1 illustrates this and can be modified in the dynamic setting for a similar conclusion). Hence, if the first stage reports are made public, \mech\ does {\em not} ensure truthfulness in a subgame perfect equilibrium.
The {\em classic} mechanism, on the other hand, continues to satisfy truth-telling in a subgame perfect equilibrium even in this complete information scenario, and this is because the utility of the agent is unaffected by her second stage valuation reports. So, to summarize, in the incomplete information setting, \mech\ provides a strict truthfulness guarantee in the second stage and the truthfulness is in an ex-post Nash equilibrium, but in a complete information setting, it does not ensure truthfulness in a subgame perfect equilibrium, while the {\em classic} mechanism is not {\em strictly} truthful in an ex-post Nash equilibrium for an incomplete information setting, but is {\em weakly} truthful in a subgame perfect equilibrium in the complete information setting. It is important to note that even though the {\em classic} mechanism is weakly truthful in the second stage, and every agent's utility is unaffected by their valuation report, the truthfulness in the type reports in the first stage requires that the agents be truthful in the second stage. Hence, one needs to {\em assume} in the mechanism by \cite{mezzetti2004mechanism} that the agents report their valuations truthfully even when their utilities are unaffected by their reports.



%

\subsection{Efficiency and incentive compatibility}


%

The following theorem shows that \mech\ satisfies two desirable properties in the unrestricted setting.


\begin{theorem}
 \label{thm:epic}
 Under SMTT, with SA and PIV, \mech\ is EFF and w.p.\ EPIC. In addition, the second stage of \mech\ is strictly EPIC.
\end{theorem}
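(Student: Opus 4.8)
The plan is to prove the three claims in the order EFF, then w.p.\ EPIC, then strictness of the second stage, with the bulk of the work concentrated in a single ``marginal-contribution'' identity that drives the incentive argument. Efficiency is essentially immediate: when every agent reports truthfully, $\hat\theta_t=\theta_t$, and the allocation rule \eqref{allocation-MATRIX} becomes verbatim the Bellman optimality condition of Definition~\ref{EFF}, namely \eqref{eq:eff-allocation}. So once I establish that truthful reporting is an equilibrium, the realized allocation at true types is efficient.

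The heart of the proof is the identity that, under truthful play by everyone, agent $i$'s total expected discounted utility from round $t$ equals her marginal contribution $W(\theta_t)-W_{-i}(\theta_{-i,t})$. I would substitute the truthful payment \eqref{payment-MATRIX} (with zero penalty) into the utility \eqref{eq:discounted-utility}; the current value together with $\sum_{j\neq i}\hat v_{j,t}$ collapses to $\sum_{j\in N}v_j$, and summing the discounted stream of $\sum_{j\in N}v_j$ against the continuation reproduces $W(\theta_t)$ by \eqref{eq:social-welfare-PIV}. The residual terms $\delta\,\mathbb{E}_{\theta_{t+1}|a^*,\theta_t}W_{-i}(\theta_{-i,t+1})-W_{-i}(\theta_{-i,t})$, once discounted and summed over the trajectory, telescope to $-W_{-i}(\theta_{-i,t})$; convergence of this telescope is where I would invoke $|v_i|\le M$ and $\delta\in(0,1)$ to annihilate the tail.

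For w.p.\ EPIC I would appeal to the single-deviation principle of \citet{blackwell1965discounted} (as anticipated in the discussion around \eqref{eq:discounted-utility}) and bound the payoff of a one-shot deviation $(\hat\theta_{i,t},\hat v_{i,t})$ at round $t$ with truthful play thereafter and by all others. Writing $a'=a^*(\hat\theta_{i,t},\theta_{-i,t})$, I substitute \eqref{payment-MATRIX} and use the marginal-contribution identity for the continuation from $t+1$. The two continuation terms involving $W_{-i}$ --- the one inside the payment and the $-W_{-i}$ piece of the future marginal contribution --- must cancel, and this is the one place where SMTT is essential: since $W_{-i}$ depends only on $\theta_{-i}$ and, by \eqref{eq:MTT}, the law of $\theta_{-i,t+1}$ does not depend on agent $i$'s type or report, the expectation is identical whether taken under $i$'s reported type (as in the payment) or her true type (as in the continuation). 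After the cancellation, $\hat v_{i,t}$ survives only inside the penalty $-(\hat v_{i,t}-v_i(a',\hat\theta_{a'}))^2$, so the deviation payoff is maximized by zeroing the penalty and reduces to $\sum_{j\in N}v_j(a',\theta_t)+\delta\,\mathbb{E}_{\theta_{t+1}|a',\theta_t}W(\theta_{t+1})-W_{-i}(\theta_{-i,t})$. This is precisely the Bellman objective of \eqref{eq:social-welfare-PIV} evaluated at the allocation $a'$ rather than at its maximizer $a^*(\theta_t)$, so the truthful payoff dominates. Strictness of the second stage then follows because, fixing a truthful first stage, the only $\hat v_{i,t}$-dependent term is the penalty $-(\hat v_{i,t}-v_i(a^*(\theta_t),\theta_t))^2$, uniquely maximized (at value $0$) by the truthful report.

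The main obstacle I anticipate is the careful two-stage deviation bookkeeping: the value agent $i$ \emph{experiences} is $v_i(a',\theta_t)$ at the \emph{true} types, whereas the penalty is measured against $v_i(a',\hat\theta_{a'})$ at her \emph{reported} type, and one must pin down exactly where SA and PIV are used to make $W_{-i}(\theta_{-i,t})$ a well-defined counterfactual MDP value whose transition decouples from agent $i$ in the manner SMTT demands. Keeping reported-versus-true types straight simultaneously in the allocation, in the value function, and in the two continuation expectations --- and genuinely verifying the $W_{-i}$ cancellation rather than merely asserting it --- is the step most likely to conceal an error.
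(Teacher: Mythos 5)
Your proposal is correct and follows essentially the same route as the paper's own proof: efficiency read off the allocation rule, the single-deviation principle, telescoping the future $W_{-i}$ stream, the SMTT/SA/PIV argument that $\mathbb{E}_{\theta_{t+1}\mid a^*(\hat\theta_t),\hat\theta_t}W_{-i}(\theta_{-i,t+1})=\mathbb{E}_{\theta_{t+1}\mid a^*(\hat\theta_t),\theta_t}W_{-i}(\theta_{-i,t+1})$ to cancel the two continuation terms, the Bellman-maximizer comparison, and strictness from the quadratic penalty. The only cosmetic difference is that you state the marginal-contribution identity $W(\theta_t)-W_{-i}(\theta_{-i,t})$ up front, whereas the paper derives it as the terminal expression of the deviation bound.
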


\mech\ is a two stage mechanism, and we need to ensure that truth-telling is a best response in both these stages.

\begin{proof}
Clearly, given true reported types, the allocation of \mech\ is efficient by Definition~\ref{EFF}. Hence, we need to show only that \mech\ is w.p.\ EPIC.

%
 
 To show that \mech\ is w.p.\ EPIC, let us assume that the true type profile at time $t$ is $\theta_t$, and all agents $j \neq i$ report their true types and values in each round $s = t, t+1, \cdots$ etc. Only agent $i$ reports $\hat{\theta}_{i,t}$ and $\hat{v}_{i,t}$ in the two stages. Therefore, $\hat{\theta}_t = (\hat{\theta}_{i,t}, \theta_{-i,t})$ and $\hat{v}_{j,t} = v_j(a^*(\hat{\theta}_t), \theta_{a^*(\hat{\theta}_t)})$, for all $j \neq i$. Using the {\em single deviation principle} \citep{blackwell1965discounted}, we conclude that it is enough to consider only a single shot deviation from the true report of the type. Hence, without loss of generality, let us assume that agent $i$ deviates only in round $t$ of this game.
 
Let us write down the discounted utility to agent $i$ at time $t$. 
\begin{eqnarray*}
 \lefteqn{u^{\text \mech}_i((\hat{\theta}_{i,t}, \theta_{-i,t}), (\hat{v}_{i,t}, v_{-i}(a^*(\hat{\theta}_{i,t}, \theta_{-i,t}), \theta_{a^*(\hat{\theta}_{i,t}, \theta_{-i,t})})) | \theta_t)} \\ 
 &=& \underbrace{v_i(a^*(\hat{\theta}_t), \theta_{a^*(\hat{\theta}_t)}) +
p_i^*(\hat{\theta}_t, \hat{v}_{t})}_{\textsf{current round utility}} \\
&& \qquad \qquad + \ \underbrace{\mathbb{E}_{\pi^*, \theta_t}
\left[\sum_{s= t+1}^{\infty} \delta^{s-t} (v_i(a^*(\theta_{s}),\theta_{a^*(\theta_{s})}) + p_i^*(\theta_s,v_s)) \right]}_{\textsf{expected discounted future utility}} 
\\
 &=& v_i(a^*(\hat{\theta}_t), \theta_{a^*(\hat{\theta}_t)}) + \sum_{j \neq i}
\hat{v}_{j,t} + \delta  \mathbb{E}_{\theta_{t+1} |
a^*(\hat{\theta}_t),\hat{\theta}_t}  W_{-i}(\theta_{-i,t+1}) - W_{-i}(\hat{\theta}_{-i,t}) \\ 
&& \qquad - \left ( \hat{v}_{i,t} - v_i(a^*(\hat{\theta}_t), \hat{\theta}_{a^*(\hat{\theta}_t)}) \right )^2 \\
&& \qquad \qquad + \ \mathbb{E}_{\pi^*, \theta_t}
\left[\sum_{s= t+1}^{\infty} \delta^{s-t} (v_i(a^*(\theta_{s}),\theta_{a^*(\theta_{s})}) + p_i^*(\theta_s,v_s)) \right] 
\end{eqnarray*}

 We use the shorthand $\pi^*$ to denote the allocation policy under \mech. This gives rise to the allocations $a(\cdot)$ in each round given the type profiles (either reported or true). The first equality is from Eq.~(\ref{eq:discounted-utility}). 
The second equality comes by substituting the expression of payment from Eq.~(\ref{payment-MATRIX}).

Now, from the previous discussion on the $\hat{v}_{j,t}$'s and $\hat{\theta}_{j,t}$'s, $j \neq i$, we get,
\begin{eqnarray}
 \lefteqn{u^{\text \mech}_i((\hat{\theta}_{i,t}, \theta_{-i,t}), (\hat{v}_{i,t}, v_{-i}(a^*(\hat{\theta}_{i,t}, \theta_{-i,t}), \theta_{a^*(\hat{\theta}_{i,t}, \theta_{-i,t})})) | \theta_t)} \nonumber \\ 
 &=& v_i(a^*(\hat{\theta}_t), \theta_{a^*(\hat{\theta}_t)}) + \sum_{j \neq i}
v_j(a^*(\hat{\theta}_t), \theta_{a^*(\hat{\theta}_t)}) + \delta  \mathbb{E}_{\theta_{t+1} | a^*(\hat{\theta}_t),\hat{\theta}_t} 
W_{-i}(\theta_{-i,t+1}) \nonumber \\ && - W_{-i}(\theta_{-i,t}) 
 - \left ( \hat{v}_{i,t} - v_i(a^*(\hat{\theta}_t), \hat{\theta}_{a^*(\hat{\theta}_t)}) \right )^2 \nonumber \\ && + \ \mathbb{E}_{\pi^*, \theta_t}
\left[\sum_{s= t+1}^{\infty} \delta^{s-t} (v_i(a^*(\theta_{s}),\theta_{a^*(\theta_{s})}) + p_i^*(\theta_s,v_s)) \right] \nonumber \\
&\leq& v_i(a^*(\hat{\theta}_t), \theta_{a^*(\hat{\theta}_t)}) + \sum_{j \neq i}
v_j(a^*(\hat{\theta}_t), \theta_{a^*(\hat{\theta}_t)}) + \delta  \mathbb{E}_{\theta_{t+1} | a^*(\hat{\theta}_t),\hat{\theta}_t} 
W_{-i}(\theta_{-i,t+1}) \nonumber \\ 
 && - W_{-i}(\theta_{-i,t}) + \ \mathbb{E}_{\pi^*, \theta_t}
\left[\sum_{s= t+1}^{\infty} \delta^{s-t} (v_i(a^*(\theta_{s}),\theta_{a^*(\theta_{s})}) + p_i^*(\theta_s,v_s)) \right] \label{eqn:exp-utility-2}
\end{eqnarray}

The equality comes because of the assumption that all agents $j \neq i$ report their types and values truthfully. The inequality is because we are ignoring a non-positive term.
Now, let us consider the last term of the above equation. 

\begin{eqnarray*}
 \lefteqn{\mathbb{E}_{\pi^*, \theta_t}
\left[\sum_{s= t+1}^{\infty} \delta^{s-t} (v_i(a^*(\theta_{s}),\theta_{a^*(\theta_{s})}) + p_i^*(\theta_s,v_s)) \right]} \\ 
  &=& \mathbb{E}_{\pi^*, \theta_t} \Big[\sum_{s= t+1}^{\infty} \delta^{s-t} \Big(v_i(a^*(\theta_{s}),\theta_{a^*(\theta_{s})}) + \sum_{j \neq i}
v_j(a^*(\theta_{s}),\theta_{a^*(\theta_{s})}) \\
  && \qquad \qquad + \ \delta  \mathbb{E}_{\theta_{s+1} | a^*(\theta_{s}),\theta_s} 
W_{-i}(\theta_{-i,s+1}) - W_{-i}(\theta_{-i,s}) \Big) \Big] \\
&=& \mathbb{E}_{\pi^*, \theta_t} \Big[\sum_{s= t+1}^{\infty} \delta^{s-t} \Big(\sum_{j \in N}
v_j(a^*(\theta_{s}),\theta_{a^*(\theta_{s})}) \\ && \qquad \qquad  + \ \delta  \mathbb{E}_{\theta_{s+1} | a^*(\theta_{s}),\theta_s} 
W_{-i}(\theta_{-i,s+1}) - W_{-i}(\theta_{-i,s}) \Big) \Big] 
\end{eqnarray*}

The first equality comes from Eq.~(\ref{payment-MATRIX}). We can now rearrange the expectation for the first term above using the Markov property of $\theta_t$ that gives, $\mathbb{E}_{\pi^*, \theta_t}[\cdot] = \mathbb{E}_{\theta_{t+1} | a^*(\hat{\theta}_t),\theta_t} [\mathbb{E}_{\pi^*, \theta_{t+1}} [\cdot]]$.
%
Therefore,

\begin{eqnarray}
\lefteqn{\mathbb{E}_{\pi^*, \theta_t}
\left[\sum_{s= t+1}^{\infty} \delta^{s-t} (v_i(a^*(\theta_{s}),\theta_{a^*(\theta_{s})}) + p_i^*(\theta_s,v_s)) \right]} \nonumber \\
&=& \mathbb{E}_{\theta_{t+1} | a^*(\hat{\theta}_t),\theta_t} \left[ \mathbb{E}_{\pi^*, \theta_{t+1}} \left(\sum_{s= t+1}^{\infty} \delta^{s-t} \sum_{j \in N} v_j(a^*(\theta_{s}),\theta_{a^*(\theta_{s})}) \right) \right] \nonumber \\ 
&& + \ \mathbb{E}_{\pi^*, \theta_t} \left[ \sum_{s= t+1}^{\infty} \delta^{s-t} \left( \delta  \mathbb{E}_{\theta_{s+1} | a^*(\theta_{s}),\theta_s} W_{-i}(\theta_{-i,s+1}) - W_{-i}(\theta_{-i,s}) \right) \right] \nonumber \\
&=& \mathbb{E}_{\theta_{t+1} | a^*(\hat{\theta}_t),\theta_t} \left( \delta W(\theta_{t+1}) \right) \nonumber \\
&& + \ \mathbb{E}_{\pi^*, \theta_t} \left[ \sum_{s= t+1}^{\infty} \delta^{s-t} \left( \delta \mathbb{E}_{\theta_{s+1} | a^*(\theta_{s}),\theta_s} W_{-i}(\theta_{-i,s+1}) - W_{-i}(\theta_{-i,s}) \right) \right] \label{eqn:exp-utility-3}
\end{eqnarray}

 The last equality comes from the definition of $W(\theta_{t+1})$. Let us now focus on the last term of the above equation.


\begin{eqnarray} \label{eqn:exp-utility-4}
 \lefteqn{\mathbb{E}_{\pi^*, \theta_t} \left[ \sum_{s= t+1}^{\infty} \delta^{s-t} \left( \delta  \mathbb{E}_{\theta_{s+1} | a^*(\theta_{s}),\theta_s} W_{-i}(\theta_{-i,s+1}) - W_{-i}(\theta_{-i,s}) \right) \right] } \nonumber \\
 &=& \cancel{\delta^2 \mathbb{E}_{\pi^*, \theta_t} W_{-i}(\theta_{-i,t+2})} - \delta \mathbb{E}_{\theta_{t+1} | a^*(\hat{\theta}_t),\theta_t} W_{-i}(\theta_{-i,t+1}) \nonumber \\
 && \qquad + \  \cancel{\delta^3 \mathbb{E}_{\pi^*, \theta_t} W_{-i}(\theta_{-i,t+3})} - \cancel{\delta^2 \mathbb{E}_{\pi^*, \theta_t} W_{-i}(\theta_{-i,t+2})} \nonumber \\
 && \qquad \qquad + \ \cdots - \cancel{\delta^3 \mathbb{E}_{\pi^*, \theta_t} W_{-i}(\theta_{-i,t+3})} \nonumber \\
 && \qquad \qquad \qquad + \ \cdots - \cdots \nonumber \\
 &=& - \delta \mathbb{E}_{\theta_{t+1} | a^*(\hat{\theta}_t),\theta_t} W_{-i}(\theta_{-i,t+1})
\end{eqnarray}

\if 0
Let us show the reduction from the first term on the LHS to the first term on the RHS above. The reduction of the other terms comes from similar exercises which is straightforward and not shown here.

\begin{align*}
 \lefteqn{\mathbb{E}_{\pi^*, \theta_t} \mathbb{E}_{\theta_{t+2} | a^*(\theta_{t+1}),\theta_{t+1}} W_{-i}(\theta_{-i,t+2})} \\
 &= \mathbb{E}_{\theta_{t+1} | a^*(\hat{\theta}_t), \theta_t} [\mathbb{E}_{\theta_{t+2} | a^*(\theta_{t+1}),\theta_{t+1}} W_{-i}(\theta_{-i,t+2})] \\
 &= \mathbb{E}_{\theta_{t+1} | a^*(\hat{\theta}_t), \theta_t} [\mathbb{E}_{\theta_{t+2} | a^*(\theta_{t+1}),\theta_{t+1}, \theta_t} W_{-i}(\theta_{-i,t+2})] \\
 &= \mathbb{E}_{\theta_{t+2} | \pi^*, \theta_t} W_{-i}(\theta_{-i,t+2}) \\
 &= \mathbb{E}_{\pi^*, \theta_t} W_{-i}(\theta_{-i,t+2})
\end{align*}

The first equality above comes from the fact that the function inside bracket is only a function of $\theta_{t+1}$, and the second equality is due to the Markov property.
\fi

Combining Equations~\ref{eqn:exp-utility-2}, \ref{eqn:exp-utility-3}, and \ref{eqn:exp-utility-4}, we get,
\begin{eqnarray} \label{eqn:exp-utility-final}
 \lefteqn{u^{\text \mech}_i((\hat{\theta}_{i,t}, \theta_{-i,t}), (\hat{v}_{i,t}, v_{-i}(a^*(\hat{\theta}_{i,t}, \theta_{-i,t}), \theta_{a^*(\hat{\theta}_{i,t}, \theta_{-i,t})})) | \theta_t)} \nonumber \\ 
 &\leq& v_i(a^*(\hat{\theta}_t), \theta_{a^*(\hat{\theta}_t)}) + \sum_{j \neq i}
v_j(a^*(\hat{\theta}_t), \theta_{a^*(\hat{\theta}_t)}) + \delta  \mathbb{E}_{\theta_{t+1} | a^*(\hat{\theta}_t),\hat{\theta}_t} 
W_{-i}(\theta_{-i,t+1}) \nonumber \\ 
 && \qquad - \ W_{-i}(\theta_{-i,t}) + \delta \mathbb{E}_{\theta_{t+1} | a^*(\hat{\theta}_t),\theta_t} \left[ W(\theta_{t+1}) - W_{-i}(\theta_{-i,t+1}) \right]
\end{eqnarray}

We also note that,
\begin{equation}
 \label{indep-tt}
 \mathbb{E}_{\theta_{t+1} | a^*(\hat{\theta}_t),\hat{\theta}_t} 
W_{-i}(\theta_{-i,t+1}) = \mathbb{E}_{\theta_{t+1} |
a^*(\hat{\theta}_t),\theta_{t}}  W_{-i}(\theta_{-i,t+1}) 
\end{equation}

This is because when $i$ is removed from the system in SA domain (while computing
$W_{-i}(\theta_{-i,t+1})$), the values of none of the other agents will depend on the
type $\theta_{i,t+1}$, due to PIV. And due to the independence of type transitions, $i$'s
reported type $\hat{\theta}_{i,t}$ can only influence $\theta_{i,t+1}$. Hence,
the reported value of agent $i$ at $t$, i.e., $\hat{\theta}_{i,t}$ cannot affect
$W_{-i}(\theta_{-i,t+1})$. 
%

Hence, Equation~\ref{eqn:exp-utility-final} can be rewritten to show the following inequality.
\begin{eqnarray}
 \label{replace}
 \lefteqn{u^{\text \mech}_i((\hat{\theta}_{i,t}, \theta_{-i,t}), (\hat{v}_{i,t}, v_{-i}(a^*(\hat{\theta}_{i,t}, \theta_{-i,t}), \theta_{a^*(\hat{\theta}_{i,t}, \theta_{-i,t})})) | \theta_t)} \nonumber \\ 
 &\leq& v_i(a^*(\hat{\theta}_t), \theta_{a^*(\hat{\theta}_t)}) + \sum_{j \neq i}
v_j(a^*(\hat{\theta}_t), \theta_{a^*(\hat{\theta}_t)}) + \cancel{{\color{blue} \delta  \mathbb{E}_{\theta_{t+1} |
a^*(\hat{\theta}_t),\theta_t}  W_{-i}(\theta_{-i,t+1})}} 
\nonumber \\  
 && - W_{-i}(\theta_{-i,t}) + \delta \mathbb{E}_{\theta_{t+1} | a^*(\hat{\theta}_t), \theta_t}
[W(\theta_{t+1}) - \cancel{{\color{blue} W_{-i}(\theta_{-i,t+1})}}] \ \mbox{(from Eq.~\ref{indep-tt})} \nonumber \\ 
 &=& \sum_{j \in N} v_j(a^*(\hat{\theta}_t), \theta_{a^*(\hat{\theta}_t)}) + \delta
\mathbb{E}_{\theta_{t+1} | a^*(\hat{\theta}_t), \theta_t} W(\theta_{t+1}) - W_{-i}(\theta_{-i,t}) \nonumber \\
 &\leq& \sum_{j \in N} v_j(a^*(\theta_{t}), \theta_{a^*(\theta_t)}) + \delta
\mathbb{E}_{\theta_{t+1} | a^*(\theta_{t}), \theta_t} W(\theta_{t+1}) - W_{-i}(\theta_{-i,t}) \nonumber \\
  &&\quad \mbox{(by definition of $a^*(\theta_{t})$, Eq.~\ref{allocation-MATRIX})} \nonumber \\ 
 &=& u^{\text \mech}_i(\theta_t, (v_i(a^*(\theta_t), \theta_{a^*(\theta_t)}), v_{-i}(a^*(\theta_t), \theta_{a^*(\theta_t)})) | \theta_t). \label{eq:utility-def} 
\end{eqnarray}
This shows that utility of agent $i$ is maximized when $\hat{\theta}_{i,t} = \theta_{i,t}$ and $\hat{v}_{i,t} = v_i(a^*(\theta_t), \theta_{a^*(\theta_t)})$. This proves that \mech\ is within period ex-post incentive compatible.

We now argue that the second stage is strictly EPIC for an agent $i$. This happens because of the quadratic penalty term $\left ( \hat{v}_{i,t} - v_i(a^*(\hat{\theta}_t), \hat{\theta}_{a^*(\hat{\theta}_t)}) \right )^2$ in the payment $p_i^*$ (Eq.~(\ref{payment-MATRIX})). Notice that if all the agents except $i$ report the types and values truthfully, and agent $i$ also reports her type truthfully in the first stage, then the penalty term will always penalize her if $\hat{v}_{i,t}$ is different from $v_i(a^*(\theta_t), \theta_{a^*(\hat{\theta}_t)})$, which is her true valuation. Hence, the best response of agent $i$ would be to report the true values in the second stage, which makes \mech\ strictly EPIC in this stage.
\end{proof}

\subsection{Why a dynamic pivot mechanism would not work in this setting}
\label{sec:why-not-DPM}

It is interesting to note that, if we tried to
use the dynamic pivot mechanism (DPM),
\citep{bergemann-valimaki10dynamic-pivot}, unmodified in this setting,
the true type profile $\theta_{t}$ in the first summation of
Eq.~(\ref{eqn:exp-utility-final}) would have been replaced by
$\hat{\theta}_{t}$, since this comes from the payment term
(Eq.~(\ref{payment-MATRIX})). The proof for the DPM relies on the \emph{private value} assumption (see the beginning of Section~\ref{model-back} for a definition) such that, when reasoning about the valuations for the other agents $j \neq i$, we have $v_j(a^*((\hat{\theta}_{i,t}, \theta_{-i,t})), (\hat{\theta}_{i,t}, \theta_{-i,t})) = v_j(a^*(\hat{\theta}_t), \theta_{j,t})$, with which the EPIC claim of DPM can be shown. But in the interdependent value setting, we cannot do such a substitution, and hence the proof of EPIC in DPM does not work. We have to invoke the second stage of value reporting in order to satisfy the EPIC.

\subsection{Ex-post individual rationality}

With SA and PIV, we now show that \mech\ is individually rational.

\begin{theorem}[Individual Rationality]
  \label{thm:epir}
  Under SMTT, with SA and PIV, \mech\ is w.p.\ EPIR.
\end{theorem}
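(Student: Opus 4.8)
The plan is to evaluate the truthful utility in closed form and then show it is non-negative. First I would note that the calculation already performed in the proof of Theorem~\ref{thm:epic} evaluates, when all agents (including $i$) report truthfully in both stages, the utility of agent $i$ exactly: setting $\hat{\theta}_{i,t}=\theta_{i,t}$ and $\hat{v}_{i,t}=v_i(a^*(\theta_t),\theta_{a^*(\theta_t)})$ in Eq.~\ref{eq:utility-def}, the quadratic penalty vanishes and
\[
u^{\text{\mech}}_i(\theta_t, (v_i(a^*(\theta_t), \theta_{a^*(\theta_t)}), v_{-i}(a^*(\theta_t), \theta_{a^*(\theta_t)})) \mid \theta_t) = \sum_{j \in N} v_j(a^*(\theta_t), \theta_{a^*(\theta_t)}) + \delta\, \mathbb{E}_{\theta_{t+1}\mid a^*(\theta_t),\theta_t} W(\theta_{t+1}) - W_{-i}(\theta_{-i,t}).
\]
Since the current-round valuations are deterministic given $\theta_t$ and $a^*(\theta_t)$ is, by Definition~\ref{EFF}, a maximizer of the right-hand side of the Bellman equation~\ref{eq:social-welfare-PIV}, the first two terms together equal $W(\theta_t)$. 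Hence the truthful utility collapses to $W(\theta_t) - W_{-i}(\theta_{-i,t})$, and w.p.\ EPIR (Definition~\ref{EPIR}) reduces to the single welfare-monotonicity inequality $W(\theta_t) \ge W_{-i}(\theta_{-i,t})$.

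Establishing this inequality is the step I expect to require the most care, and it is precisely where SA and PIV enter. I would argue by feasibility and domination. By SA (Definition~\ref{SA}) we have $A_{-i} \subseteq A$, so any allocation policy available in the reduced world where agent $i$ is absent is also available in the full world under the convention that $i$ is simply never selected. Fix an optimal reduced policy attaining $W_{-i}(\theta_{-i,t})$ and run it verbatim in the full system. Under PIV (Definition~\ref{PIV}), because $i$ is never in the selected set her own valuation is identically $0$ and, crucially, the valuations of the selected agents depend only on the types of selected agents and are therefore unaffected by $\theta_i$; moreover, by the independence of type transitions in SMTT (Definition~\ref{SMTT}), the marginal evolution of $\theta_{-i}$ under this policy is identical in both worlds. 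Consequently the entire discounted welfare stream this policy generates in the full world equals exactly $W_{-i}(\theta_{-i,t})$.

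Finally, since $W(\theta_t)$ is the maximum of the discounted welfare over all policies in $A$ and the policy just constructed is merely one feasible choice, we obtain $W(\theta_t) \ge W_{-i}(\theta_{-i,t})$. Combining this with the closed form above yields a truthful utility of $W(\theta_t) - W_{-i}(\theta_{-i,t}) \ge 0$, which is exactly the w.p.\ EPIR condition of Definition~\ref{EPIR}, completing the proof. The only subtlety to guard against is verifying that the reduced-world value function $W_{-i}$, defined via Eq.~\ref{eq:social-welfare-except-i-PIV}, genuinely coincides with the full-world value of the ``never select $i$'' policy; this is exactly what SA, PIV, and the product-form transitions of SMTT together guarantee, and it is the only place where the restricted domain is indispensable.
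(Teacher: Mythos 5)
Your proposal is correct and follows essentially the same route as the paper's proof: it reduces the truthful utility to $W(\theta_t) - W_{-i}(\theta_{-i,t})$ via the closed form obtained at the end of the incentive-compatibility argument, and then establishes nonnegativity by the feasibility/domination argument that SA makes every reduced-world policy available in the full world while PIV (and the product-form transitions) ensure it generates exactly the welfare $W_{-i}(\theta_{-i,t})$ there. The paper states this last step more tersely, but the underlying reasoning is identical.
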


\begin{proof}
Due to SA, the set of allocations excluding agent $i$, denoted by $A_{-i}=2^{N \setminus \{i\}}$, is already contained in the set of allocations including $i$, denoted by $A=2^N$. Formally, this means $a_{-i} \in A_{-i} \subseteq A \ni a$. Therefore, the policies $\pi_{-i} \in A_{-i}^\infty \subseteq A^\infty \ni \pi$.
%
Hence in the ex-post Nash
equilibrium, the utility of agent $i$ is given by,
\begin{align*}
 \lefteqn{u^{\text \mech}_i(\theta_t, (v_i(a(\theta_t), \theta_t), v_{-i}(a(\theta_t), \theta_t)) | \theta_t)} \\ 
 &= \sum_{j \in N} v_j(a^*(\theta_{t}), \theta_{t}) + \delta \mathbb{E}_{\theta_{t+1} | a^*(\theta_{t}), \theta_t} W(\theta_{t+1}) - W_{-i}(\theta_{-i,t}) \\ 
 &= W(\theta_{t}) - W_{-i}(\theta_{-i,t}) \\
 &\geq 0.
\end{align*}
The first equality comes from the last equality in Equation~\ref{eq:utility-def} and the second equality is by definition of $W(\theta_t)$ and $a^*(\theta_{t})$. The last inequality is an immediate consequence of SA and PIV, as the allocation that maximizes the social welfare excluding agent $i$ is already in the potential allocations when $i$ is present.
This proves that \mech\ is within period ex-post individually rational. 

\if 0
expression in the equation above can be written as,

\begin{eqnarray}
 \lefteqn{W(\theta_{t}) - W_{-i}(\theta_{-i,t})} \nonumber \\
 &=& \mathbb{E}_{\pi^*, \theta_t} \left[ \sum_{s = t}^{\infty}
\delta^{s-t} \sum_{j \in N} v_j(a^*(\theta_s), \theta_{a^*(\theta_s)})\right] \nonumber \\
&& - \ \mathbb{E}_{\pi_{-i}^*, \theta_t} \left[ \sum_{s = t}^{\infty}
\delta^{s-t} \sum_{j \in N \setminus \{i\}} v_j(a_{-i}^*(\theta_{-i,s}), \theta_{a_{-i}^*(\theta_{-i,s})})\right] \nonumber \\
 &\geq& \mathbb{E}_{\pi_{-i}^*, \theta_t} \left[ \sum_{s = t}^{\infty}
\delta^{s-t} \sum_{j \in N \setminus \{i\}} v_j(a_{-i}^*(\theta_{-i,s}), \theta_{a_{-i}^*(\theta_{-i,s})})\right] \nonumber \\
&& - \ \mathbb{E}_{\pi_{-i}^*, \theta_t} \left[ \sum_{s = t}^{\infty}
\delta^{s-t} \sum_{j \in N \setminus \{i\}} v_j(a_{-i}^*(\theta_{-i,s}), \theta_{a_{-i}^*(\theta_{-i,s})})\right] \nonumber \\
 &=& 0
\end{eqnarray}
The inequality holds since while choosing the optimal policy including agent $i$, i.e., $\pi^*$, one has the option of choosing $\pi_{-i}^*$ as well, as we are in the SA domain, and the fact that the valuations of the unallocated agents are zero, a consequence of the PIV domain. If this inequality was not true, then there would exist some $\pi_{-i}^* \in A_{-i}^\infty$ which would have achieved a social welfare more than the maximum, which is a contradiction.
\fi 

\end{proof}


\subsection{Complexity of computing the allocation and payment} \label{sec:complexity}
The non-strategic version of the resource to task assignment problem was that of solving an MDP, whose complexity was polynomial in the size of state-space \citep{ye05complexity-of-MDP}. Interestingly, for the proposed mechanism, the allocation and payment decisions are also solutions of MDPs (Equations~\ref{allocation-MATRIX}, \ref{payment-MATRIX}). Hence the proposed mechanism \mech\ has polynomial time complexity in the number of agents and size of the state-space, which is the same as that of the dynamic pivot mechanism~\citep{bergemann-valimaki10dynamic-pivot}.


\section{Discussions on a General Result}
\label{sec:general}

We can generalize the assumptions of SA and PIV in the following way that would result in the same conclusions as in this paper. These definitions also serve to show the minimal requirements of the proofs.

Consider a set of all possible allocations denoted by $\mathcal{A}$. The valuations are called {\em independent of irrelevant agents (IIA)} with respect to a set of allocations $A \subseteq \mathcal{A}$ if for all $i \in N$, $\exists \ A_{-i} \subseteq \mathcal{A}$ s.t. $\forall \ a_{-i} \in A_{-i}$,
\begin{align*}
 v_j(a_{-i}, \theta) &= v_j(a_{-i}, \theta_{-i}) \\
 v_i(a_{-i}, \theta) &= 0
\end{align*}

SA and PIV together constitute a special case of IIA valuations. However, there exist not-so-restrictive examples as well. Consider a set of agents $N = \{1,2,\ldots, n\}$ having types $\theta_1, \theta_2, \ldots, \theta_n$ and a {\em dummy} agent $D$ who does not have any type. Let $\mathcal{A} = 2^{N \cup \{D\}}$ and $A = 2^N$. We define $A_{-i} = 2^{N \cup \{D\} \setminus \{i\}}$, the power set of agents where the dummy replaces agent $i$. Since the dummy does not have any type, the valuations of other agents after replacing agent $i$ with $D$ depends only on $\theta_{-i}$. Note, in particular, that $A_{-i} \nsubseteq A$.

If now, in addition, $A_{-i} \subseteq A$, then the allocations are called {\em monotone}. SA is a monotone set of allocations and PIV is IIA over that.

We can show that Theorem~\ref{thm:epic} extends with IIA valuations and Theorem~\ref{thm:epir} extends with IIA valuations with respect to monotone allocations. We omit the proofs since they follow identical arguments.

%

\if 0
\section{Simulation results}
\label{sec:simulations}


In this section, we demonstrate the properties satisfied by \mech\
through simple illustrative experiments, and compare the results with
a na\"{i}ve fixed payment mechanism ({\tt CONST}). In addition to the already 
proven properties of \mech, we also explore two more properties
here, namely {payment consistency and budget balance (defined later). For brevity, we choose a relatively 
small example, but analyze it in detail.


\textbf{Experimental Setup:} Let us consider an example of a task execution in a small organization, which has three agents: a task owner (center) and two production teams (agents). We index them by 0, 1, and 2. At time $t$, let the difficulty of the
task be denoted by $\theta_{0,t}$, which is a private information of the task owner, i.e., agent 0, and the efficiencies of agents 1 and 2 be denoted by $\theta_{1,t}$ and $\theta_{2,t}$ respectively, which are private to the respective agents. We assume that each of these types can take three possible values: high (H), medium (M), and low
(L). To define value functions, we associate a real
number to each of these types,  
given by 1 (high), 0.75 (medium), and 0.5 (low). We consider the following value structure.
\begin{eqnarray*}
v_0(a_t, \theta_t) &=& \left(\frac{k_1}{\theta_{0,t}} \sum_{i \in a_t, i \neq 0} \theta_{i,t} - k_2 \right) \mathbf{1}_{0 \in a_t};\\
v_j(a_t, \theta_t) &=& -k_3 \theta^2_{j,t} \mathbf{1}_{j \in a_t}, j = 1,2; k_i > 0, i =1,2,3.
\end{eqnarray*}
The intuition behind choosing such a value function is the following. The value of the center is directly proportional to the sum total efficiency of the selected employees and inversely proportional to the
difficulty of the task. For each production team, the value is
negative (representing cost). It is proportional to the square of
the efficiency level, representing a law of diminishing returns. Though
the results in this paper do not place any restriction on the value
structure, we have chosen a form that is reasonable in practice. Note that the value of the center depends on the types of all
the selected agents, giving rise to PIV setting. Also, because of the
presence of both buyers and sellers, the market setting here is that
of an exchange economy.  

Type transitions are independent and follow a first
order Markov chain. We choose a transition probability matrix that reflects that efficiency is likely to be
reduced after a high workload round, improved after a low workload
round (e.g.\ when a production team is not assigned).

\textbf{A Na\"ive Mechanism ({\tt CONST}):} We consider another mechanism, where the allocation 
decision is the same as that of \mech, that is, given by Equation~\ref{allocation} but
the payment is a fixed constant $p$ if the team is selected, and the
task owner is charged an
amount $p$ times the number of agents selected. This mechanism satisfies, by construction, PC and BB properties. We call this mechanism {\tt CONST}.

\begin{figure}[h!]
\begin{center}
   \includegraphics[width=0.7\columnwidth]{../figures/1-new.eps}
   \caption{Utility of task owner and production team 1 under \mech\ as
     a function of true 
     type profiles. The ordering of the $3^3 = 27$ type profiles is
     represented in the bottom-most plot.}
  \label{fig:GDPM-center-agent}
  \end{center}
\end{figure}

\begin{figure}[h!]
  \begin{center}
    \includegraphics[width=0.7\columnwidth]{../figures/2-new.eps}
    \caption{Utility of task owner and team 1 under {\tt CONST} as
     function of true 
     type profiles. The
      x-axis follows same true profile order as in
      Fig.~\ref{fig:GDPM-center-agent}.}
  \label{fig:CPM-center-agent}
  \end{center}
\end{figure}

\begin{figure}[h!]
  \begin{center}
    \includegraphics[width=0.7\columnwidth]{../figures/3-new.eps}
    \caption{Payment consistency and budget properties of \mech. The
      x-axis follows same true profile order as in
      Fig.~\ref{fig:GDPM-center-agent}.} 
  \label{fig:PC-BB-GDPM}
  \end{center}
\end{figure}

The experiment is run for an infinite horizon with discount factor $\delta = 0.7$ are the results are summarized in
Figures~\ref{fig:GDPM-center-agent}, \ref{fig:CPM-center-agent}, and
\ref{fig:PC-BB-GDPM}. There are 3 agents each having 3 possible types. Therefore the $3^3 = 27$ possible type profiles are represented along the $x$-axis of all the plots, however, it is explicitly shown only in the bottom-most plot of Figure~\ref{fig:GDPM-center-agent} (the true types are denoted by the letters `H', `M', and `L'). This bottom-most plot also shows the actual (stationary) allocation for all the $27$ type profiles (if the agents would have reported their types truthfully) when the allocation rule of \mech\ is applied. A `$\circ$' denotes the respective agent is selected, a `$\times$' it is not. 

The $y$-axis of the top plot in Figure~\ref{fig:GDPM-center-agent} shows the utility under the payment rule of \mech\ (defined in Eq.~\ref{eq:utility-def}) to agent 0 (the task owner). The $y$-axis of the plot in the middle shows the utility to agent 1 under the same mechanism (note that the production teams are symmetric, so it suffices to
study only one). Since we are interested in ex-post equilibria, we
show utilities in the setting where all other agents report
truthfully, and consider the impact of misreporting by the agent under
study. In these two figures, a `$\circ$' represents true report, a `$+$' denotes the utilities from a misreport. We see in Figure~\ref{fig:GDPM-center-agent} that truthful reporting is a best response for both the agents ($\circ$'s dominates $+$'s), which illustrates the EPIC result (Theorem~\ref{thm:epic}). Also, all the utilities under truthful reports lie above the zero (Theorem~\ref{thm:epir}).

Figure~\ref{fig:CPM-center-agent} shows the plots similar to Figure~\ref{fig:GDPM-center-agent}
under the mechanism {\tt CONST}. Since the allocation rule for both \mech\ and {\tt CONST} are same, the bottom-most plot of Figure~\ref{fig:GDPM-center-agent} would reappear as the $x$-axis of the plots, which we have suppressed for brevity. The plots show that the na\"{i}ve method is not EPIC (for both task owner and production teams there are $\circ$'s lying below $+$'s).

\begin{table}[t]
\centering
\begin{tabular}{|c|c|c|c|c|c|} \hline
 & EFF & EPIC & EPIR & PC & BB \\ \hline
\mech\  & \checkmark & \checkmark & \checkmark & $\times$ & $\times$ \\ \hline
{\tt CONST}  & $\times$ & $\times$ & $\times$ & \checkmark & \checkmark \\ \hline
 \end{tabular}
 \label{table:property-summary}
\caption{Simulation summary}
\end{table}

Figure~\ref{fig:PC-BB-GDPM} investigates the two other properties of \mech: {\em payment consistency} (PC)
and {\em budget balance} (BB). We call a mechanism \emph{payment consistent} (PC) if the task owner
pays and the production teams receive payment in each round. We call a mechanism \emph{budget balanced} (BB) if the sum of the monetary transfers to all the agents is non-positive (no deficit), failing which the mechanism runs into a deficit. We observe that neither of these properties are satisfied for \mech. We summarize the results in Table~\ref{table:property-summary}. Not surprisingly, \mech\ satisfies three very desirable properties: EFF, EPIC, and EPIR. However, there are a instances where it does not satisfy PC and BB. On the other hand, {\tt CONST} satisfies PC and BB by construction, but fails to satisfy the others EFF, EPIC, and EPIR. 

It seems that all these properties may not be simultaneously satisfiable in this restricted domain of
dependent valued exchange economy.
However, it is promising to derive bounds on payment inconsistency and budget deficit for a truthful mechanism such as \mech. We leave both proving the impossibility result and deriving the bounds as interesting open problems. 
\fi

\section{Conclusions and Future Work}
\label{concl}

This paper provides a first attempt of designing a dynamic mechanism that is {\em strict} ex-post incentive compatible and efficient in an interdependent value setting with Markovian type evolution. In a restricted domain, which appears often in real-world scenarios, we show that our mechanism is ex-post individually rational as well. This mechanism, \mech, extends the mechanism proposed by \citet{mezzetti2004mechanism} to a dynamic setting and connects it to the mechanism proposed by \citet{bergemann-valimaki10dynamic-pivot}.

We have discussed the interesting and challenging domain of mechanism design with dynamically varying types and interdependent valuations. There has been very little work where dynamic types and interdependent values have been addressed together. Hence, there is very little known on the limits of achievable properties in this domain. We have provided one mechanism, namely \mech, that is w.p.\ EPIC, strict in the second stage, and under a restricted domain, even w.p.\ EPIR. However, we do not know what mechanism characterizes those properties in this domain. For example, a question that may arise is ``Is this the only efficient dynamic mechanism that satisfies strict w.p.\ EPIC in an interdependent value setting?''. For the static setting with independent values we have the Green-Laffont characterization result that answers this question for efficiency and DSIC. However, such a characterization result is absent for interdependent valuations for both static and dynamic 
mechanisms. 
Developing such a full characterization would be worthwhile.

\subsection*{Acknowledgements}
We are grateful to Ruggiero Cavallo, David C. Parkes, two anonymous referees and the associate editor for useful comments on the paper. This work was done when the first author was a student at the Indian Institute of Science and was supported by Tata Consultancy Services (TCS) Doctoral Fellowship. This work is part of a collaborative project between Xerox Research and Indian Institute of Science on incentive compatible learning.


\bibliographystyle{ecta}
\bibliography{master08042013}

\end{document}